\newtheorem{theorem}{Theorem}[section]
\newtheorem{lemma}[theorem]{Lemma}
 \newtheorem{corollary}[theorem]{Corollary}
\theoremstyle{definition}
\newtheorem{observation}[theorem]{Observation}
\newcommand{\SW}{\mathrm{SW}}
\newcommand{\Snoti}{\mathcal{S}_{-i}}
\newcommand{\R}{\mathbb{R}}
\newcommand{\osm}{\textup{\textsc{OSM}}}
\DeclareMathOperator*{\argmin}{arg\,min}
\newcommand{\wl}{${\sf while}$-${\sf loop}$}
\let\oldtabular\tabular 
\renewcommand{\tabular}{\normalsize\oldtabular}
\title{Welfare-Optimal Serial Dictatorships have Polynomial Query Complexity}
\author{%
   Ioannis Caragiannis\\
   Aarhus University, Denmark\\
   \texttt{iannis@cs.au.dk} \\
   \and
   Kurt Mehlhorn\\
   Max Planck Institute for Informatics, SIC, Germany\\
   \texttt{mehlhorn@mpi-inf.mpg.de}
\and
   Nidhi Rathi\\
   Max Planck Institute for Informatics, SIC, Germany\\
   \texttt{nrathi@mpi-inf.mpg.de}
}
\begin{document}

\maketitle
\begin{abstract}
\emph{Serial dictatorship} is a simple mechanism for coordinating agents in solving combinatorial optimization problems according to their preferences. The most representative such problem is one-sided matching, in which a set of $n$ agents have values for a set of $n$ items, and the objective is to compute a matching of the agents to the items of maximum total value (a.k.a., \emph{social welfare}). Following the recent framework of Caragiannis and Rathi \cite{CR23}, we consider a model in which the agent-item values are not available upfront but become known by querying \emph{agent sequences}. In particular, when the agents are asked to \emph{act} in a sequence, they respond by picking their favorite item that has not been picked by agents who acted before and reveal their value for it. Can we compute an agent sequence that induces a social welfare-optimal matching? 

We answer this question affirmatively and present an algorithm that uses polynomial number ($n^5$) of queries. This solves the main open problem stated by Caragiannis and Rathi \cite{CR23}. Our analysis uses a \emph{potential function} argument that measures progress towards learning the underlying edge-weight information. Furthermore, the algorithm has a \emph{truthful implementation} by adapting the paradigm of VCG payments.
\end{abstract}

\section{Introduction} \label{sec:intro}
In this work, we explore the well-studied problem of finding maximum-weight matchings in complete bipartite graphs\footnote{A common way to think of the problem is to consider one side of the bipartition representing the $n$ \emph{agents} and the other side representing the $n$ \emph{items}, where the weight on an agent-item edge indicates the preference of that agent for the item.} through the lens of \emph{serial dictatorship} (SD) mechanism. A representative example of SD mechanism can be seen when it was considered as a solution to the {\em house allocation} problem (e.g., see~\cite{AS98}), wherein a set of houses have to be matched to agents who have \emph{strict} preferences for them.  The algorithm considers the agents in a fixed order (or sequence) and assigns the most-preferred available house to each agent considered. The order in which agents come and pick an item  \emph{greedily} (i.e., \emph{act}) is called an \emph{action sequence}. Serial Dictatorship is arguably the simplest algorithm with applications in many varied settings such as one-sided matchings \cite{FFZ:14}, fair division \cite{SVE:99}, and assignment problems \cite{BM01}; see~\cite{AS98,CFF+16,krysta2014size,RS90} for other examples where serial dictatorship (or a variation) is useful.

The simplicity of SD mechanism leads to its computational ease as well, and hence, it is easily comprehensible for the agents. It is \emph{favorable} for every agent to choose their favorite available item on their turn (i.e., be greedy) since SD never considers any feedback from the agents other than what item they pick at their turn, thereby making SD a truthful mechanism as well. 

It is relevant to note that the choice of the action sequence is crucial since a sub-optimal action sequence may lead to inefficiencies i.e., sub-optimal \emph{social welfare}~\cite{CFF+16}. For example, consider a scenario where two items $a$ and $b$ are to be assigned to two agents $1$ and $2$, under the constraint that each agent is assigned exactly one item. Agent $1$ values item $a$ at $9$ and item $b$ at $1$, whereas agent $2$ values item $a$ at $10$ and item $b$ at $8$. Let us execute SD with both sequences of the two agents. When agent $1$ is followed by agent $2$, we obtain an optimal social welfare of $17$, while the reverse sequence yields a (sub-optimal) social welfare of only $11$. Therefore, when applying the SD mechanism, the agent ordering play a crucial role towards the quality of the social welfare achieved in assignment problems. Note that, uniformly choosing over the two action sequences---which is the main idea of \emph{random serial dictatorship}----in the above example still leads to inefficiencies \cite{AS98, BM01, FFZ:14, KMRZ14}. 

Following the recent framework introduced by Caragiannis and Rathi~\cite{CR23}, we consider a model wherein agent-item preferences in a complete bipartite weighted graph are not known upfront to an algorithm, but instead, it \emph{learns} them through \emph{action sequence queries}. The goal is to find an action sequence that induces a (perfect) matching between agents and items with maximum possible weight i.e., highest social welfare. We call such an action sequence to be \emph{welfare-optimal}. To illustrate the loss one may incur to the social welfare by using action sequences, Caragiannis and Rathi~\cite{CR23} defined the concept of \emph{price of serial dictatorship} (PoSD). They further proved an interesting structural property that \emph{for complete bipartite weighted graphs, there always exists an action sequence that induces a maximum-weight matching}. Therefore, PoSD of maximum weight perfect matching in bipartite
graphs is $1$, i.e., any \emph{welfare-optimal action sequence} generates a maximum-weight matching in bipartite weighted graphs. Thus, the following question naturally arises:

\begin{mdframed}
    With query access to a complete bipartite weighted graph, can we compute a \emph{welfare-optimal action sequence} using \emph{polynomially-many} action sequence queries?
\end{mdframed}

Note that the space of solutions i.e., the space of all action sequences is exponential and querying all the  action sequences will solve the problem exactly.
\smallskip

\paragraph{Our Results and Techniques:}
We answer the above question affirmatively and develop a novel algorithm (Algorithm~\ref{alg:matching-poly}) that takes input a query access to a graph $G =K_{n,n}$ equipped with agent-item preferences\footnote{Every agent has a ranking over all items along-with weight on all the edges incident to her.} and uses only $n^5$ action sequence queries to find a welfare-optimal action sequence, i.e., the action sequence that induces a maximum-weight matching in $G$; see Theorem~\ref{thm:matching-poly}. This answers the main open problem listed in \cite{CR23}. We have essentially shown that, even though there are $n!$ many action sequences, querying only $n^5$ many of them is enough to learn the necessary agent-item preferences and discover a welfare-optimal action sequence in $G$. Moreover, we show that if the edges incident to any agent have pairwise distinct weights, then Algorithm~\ref{alg:matching-poly} requires $n^4$ many queries only; see Section~\ref{sec:intuition}.

Our algorithm (Algorithm~\ref{alg:matching-poly}) progressively learns the agent-item preferences and maintains a proxy graph $G' = K_{n,n}$ with proxy edge-weights and proxy agent ranking information over the items. Throughout the execution of our algorithm, we ensure that the proxy edge-weights are always an over-estimation of the true edge-weights (until the ground truth about an agent-item edge is learnt). With this useful invariant, we prove a structural property between $G$ and $G'$ and identify a \emph{condition of welfare-optimality for $\pi$}, see Theorem~\ref{thm:matching-proxy}. Our analysis uses a \emph{potential-function} based argument that measures progress towards learning the underlying edge-weight and rank information. We prove that as soon as certain parameters in $G'$ are satisfied (see Lemma~\ref{lem:optimal}), we have learnt every information about $G$ that is necessary for learning a welfare-optimal action sequence in $G$.

It is relevant to note that Algorithm~\ref{alg:matching-poly} provides an alternate proof of the fact that the price of serial dictatorship of maximum weight perfect matching in bipartite graphs is 1 (originally proven in \cite{CR23}). 
\smallskip

\paragraph{Further Related Work:} Serial dictatorship and its variations have been extensively employed in the matching literature~\cite{M13}. The notion of greedy weighted matchings\footnote{A greedy weighted matching is produced when we start from an empty matching and iteratively puts an edge of maximum weight that is consistent with it.} in~\cite{DMS17} can be seen as  optimizing over a restricted set of action sequences.   Furthermore, the authors in~\cite{DMS17} study the computational complexity of the problem assuming that the graph information (i.e., the edge weights) is given to the algorithm upfront. The notion of picking sequence in fair division \cite{BBLMRS19,bouveret2011general,gourves2021fairness,KKNW13} is close to the action sequences that we consider here; the crucial difference is that each agent can appear several times in a picking sequence.

Optimizing over serial dictatorships can be thought of as a particular way of exploiting greediness in computation. There have been several attempts to formally define greedy algorithms, including their relations to matroids, which are covered in classical textbooks on algorithm design~\cite{CLRS01}. In relation to combinatorial optimization, the work on incremental priority algorithms~\cite{BBL+10,BNR03,DI04} has conceptual similarities. Furthermore, for maximum satisfiability, there is an ongoing research line (e.g., see~\cite{PSW+17}) that aims to design simple greedy-like algorithms that achieve good approximation ratios.  In the EconCS literature, cut, party affiliation, constraint satisfaction games (e.g., \cite{bhalgat2010approximating}) and boolean games \cite{wooldridge2013incentive} are closest to ours among works that use boolean formulae to express logical relations between agents’ actions. However, all these studies neglect query complexity questions. 



\section{Preliminaries} \label{sec:prelim}



We consider a complete bipartite graph $G=K_{n,n}$ over $n$ agents denoted by the set $[n] = \{1,2, \dots,n\}$ on the left side of the bipartition and $n$ items on the other. The weight $w(i,j) \geq 0$ denotes the value agent $i$ has for item $j$, and we represent it via a weight function $w:[n]^2\rightarrow \R_{\geq 0}$. The problem is to assign these items to agents, under the restriction that each agent gets exactly one item. We call such an assignment of items to agents a {\em matching} in $G$. We denote a matching $M$ as $\{(i,M(i))\}_{i}$, where agent $i \in [n]$ is assigned item $M(i)$.

Each agent $i$ also has a strict ranking $r_i$ that ranks the items in monotone non-increasing order with respect to the value of agent $i$ for them, breaking ties in a predefined (\emph{agent-specific}) manner. We denote the rank of item $j$ for agent $i$  by $r_i(j)$. The items of rank $1$ and $n$ for agent $i$ are ones for which she has highest and lowest values, respectively. Also, for any $j_1,j_2 \in [n]$, we have $w'(i,j_1) > w'(i,j_2)$ implies $r'_i(j_1) < r'_i(j_2)$. However, $r'_i(j_1) < r'_i(j_2)$ implies only $w'(i,j_1) \geq w'(i,j_2)$.

We say a perfect matching $M$ is \emph{Pareto-optimal} (PO) if there is no other perfect matching $M'$ such that $r_i(M'(i))\leq r_i(M(i))$ for all $i\in [n]$ and $r_i(M'({i^*}))< r_i(M({i^*}))$ for some agent $i^*\in [n]$.

An {\em action sequence} (respectively, {\em action subsequence}) $S$ is an ordering $S=(S(1),S(2), \dots)$ of the agents in $[n]$ (respectively, some of the agents in $[n]$). We write $\mathcal{S}$ to denote the set of all possible $n!$ action sequences and various subsequences. For an agent $i \in [n]$, we write $\Snoti$ to denote all possible action subsequences consisting of agents from the set $[n] \setminus \{i\}$.  For an action sequence $S$, we write $S|_{t} = (S(1), \dots, S(t))$ for its prefix of length $t$. For an integer $k \geq 1$, we use $O_k$ to denote the ordered set $(1,2, \dots, k)$ of $k$ agents. We reserve the letters $S$ and $\pi$ for use as action (sub)sequences.

Agents pick their items according to an action sequence. When it is the turn of an agent to act, she picks the item of highest value among those that have not been picked by agents who acted before her.
For an action subsequence $S\in \Snoti$, denote by $\mu_i(S)$ the item agent $i$ picks when she acts immediately after the agents in $S$. We use $\mu(S)$ to denote the set of items picked by the agents in $S$, i.e., $\mu(S)=\{\mu_{S(k)}(S|_{k-1}): 1 \leq k \leq |S|\}$. Note that, we have $\mu_i(S)= \argmin_{j\in [n]\setminus \mu(S)}{r_i(j)}$, which gives her value $v_i(S)=w(i,\mu_i(S))$. Hence, when all agents have acted, the items picked form a {\em perfect matching} in $G$.

For an action sequence $\pi \in \mathcal{S}$, we write $M(\pi;G)$ to denote the matching induced by $\pi$ in  $G$. Similarly, we write $\SW(\pi;G) = w(M(\pi;G))$ to denote the \emph{social welfare} of an action sequence $\pi$ in $G$. We call an action sequence \emph{welfare-optimal} if it produces a matching with highest social welfare, i.e, $\max_{\pi \in \mathcal{S}} \SW(\pi;G)$. Note that any welfare-optimal action sequence induces a maximum-weight matching in the underlying graph $G$ \cite{CR23}.

 An instance of \osm\ (\emph{Optimal Sequence for Matchings}) consists of a graph $G = K_{n,n}$ equipped with the weight function $w$ and rank information $\{r_i\}_i$ and the objective is to compute an action sequence of maximum social welfare using the following query oracle access. For an agent $i \in [n]$ and a (sub)sequence $S \in \Snoti$ of agents, the ${\sf Query}(i,S)$ returns a tuple $(v,t)$, where $t=\mu_i(S)$ is the item agent $i$ picks when she acts immediately after the agents in $S$ and $v$ is the value she obtains i.e., $v=v_i(S)=w(i,t)$. 
 
 An important restriction of our model is that the weight and rank information is not given to the algorithm as part of the input. Instead, the algorithms for solving \osm\ can learn them by making queries. We study the problem of computing welfare-optimal action sequences when given a query access to \osm\ instances. We are interested in algorithms that solve \osm\ using a polynomial number of queries and hence, understand its {\em query complexity}. We remark that computational limitations are not our concern in this work and we assume to have unlimited computational resources to process the valuations once these are available.


\section{Intuition} \label{sec:intuition}

For simplicity, we assume in this section that the edges incident to any agent have pairwise distinct weights and defer the treatment of equal weight edges (or ties) to Section~\ref{sec:poly-query}. For each edge $e \in [n] \times [n]$, we maintain an upper bound $w'(e)$ on the true weight $w(e)$. We initialize $w'(e)$ to $\infty$ for all $e$. 
We also maintain a (growing) set $E^*$ of edges for which we already know the true weight, i.e, $e \in E^*$ implies $w'(e) = w(e)$. Initially, $E^*$ is the empty set. We progressively learn the agent-item preferences and maintains a proxy graph $G' = K_{n,n}$ with proxy edge-weight function $w'$.

Let $M'$ be a maximum weight matching with respect to weight function $w'$. It was shown in \cite{CR23} that there is an action sequence $\pi$ constructing $M'$ in graph $G'$.\footnote{In a maximum weight matching $M'$, there is always an agent that is incident to its most valuable edge. Assume otherwise and let $B$ consist of the most valuable edge for each agent. Then $M' \cup B$ contains an improving alternating cycle. The first query is towards an agent that is incident to its most valuable edge in $M'$. Then, we continue by induction. \\
The preceding paragraph glosses over the detail that there might be equal weight edges with respect to $w'$.}

We execute $\pi$ until a query returns an edge $e = (i,j)$ with value $v = w(e)$ that does not belong to $E^* \cap M'$ or until $\pi$ is completely executed. If $(i,j) \not\in E^*$, we add it to $E^*$ and set $w'(e)$ to $v$. Note that $w'(e) = w(e)$ in this case. If $(i,j) \in E^* \setminus M'$, we decrease $w'(i,M'(i))$ to $v$. Note that when the query returns $(i,j)$, the edge $(i,M'(i))$ is also available and since the query returns the highest-weight available edge for agent $i$, $v$ must be (strictly) smaller than $w'(i,M'(i))$. 

We continue until $\pi$ is completely executed. Then $M' \subseteq E^*$ and hence $M'$ is a maximum weight matching. Also $\pi$ constructs $M'$.

How can we bound the number of queries? We can add to $E^*$ at most $n^2$ times. Also, the $w'$-value of an edge incident to $i$ is always the $w$-value of some edge incident to $i$ or infinity. So the $w'$ value of any edge can change at most $n$ times. Identifying each such update requires at most $n$ queries, and hence, with a total of $n^4$ queries, we can find the welfare-optimal action sequence in the given instance. 

The treatment of equal weights considerably complicates matters, as we see in Section~\ref{sec:poly-query}.

\section{Finding an Optimal Action Sequence}\label{sec:poly-query} 

We begin by stating a structural property about maximum-weight matchings proved by Caragiannis and Rathi~\cite{CR23}.

\begin{theorem} \cite{CR23} \label{thm:posd}
Any welfare-optimal action sequence induces a maximum-weight matching in a complete bipartite weighted graph.
\end{theorem}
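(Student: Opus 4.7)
The plan is to prove the stronger existential statement that there always exists an action sequence $\pi$ that constructs some maximum-weight perfect matching $M^*$ in $G$. Combined with the trivial upper bound $\max_{\pi \in \mathcal{S}} \SW(\pi;G) \leq \max_M w(M)$ (ranging over all perfect matchings $M$, since every action sequence induces a perfect matching), this gives $\max_{\pi} \SW(\pi;G) = \max_M w(M)$, so every welfare-optimal $\pi$ necessarily induces a maximum-weight matching. I will prove the existence by induction on $n$, with $n=1$ immediate.

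For the inductive step, the key lemma I would establish is: there is some maximum-weight matching $M^*$ of $G$ in which at least one agent $i^*$ is matched to her rank-$1$ item $b(i^*)$. To prove the lemma, take any maximum-weight matching $M^*$ and suppose no agent receives her rank-$1$ item. For each agent $i\in[n]$, let $b(i)$ denote her rank-$1$ item and let $\sigma(i)$ be the unique agent with $M^*(\sigma(i))=b(i)$. The functional digraph $i \mapsto \sigma(i)$ has out-degree $1$ everywhere, so it contains a directed cycle $i_1 \to i_2 \to \cdots \to i_k \to i_1$. Reassigning each $i_\ell$ to $b(i_\ell)=M^*(i_{\ell+1})$ (indices mod $k$) and leaving all other agents fixed yields a perfect matching $M'$. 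Because rank $1$ always denotes a highest-value item, $w(i_\ell,b(i_\ell)) \geq w(i_\ell, M^*(i_\ell))$ for every $\ell$, hence $w(M') \geq w(M^*)$. Since $M^*$ is maximum-weight, equality holds, so $M'$ is also maximum-weight; and in $M'$ every agent on the cycle is now matched to her rank-$1$ item. Replacing $M^*$ by $M'$ establishes the lemma.

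To complete the induction, fix $M^*$ and $i^*$ as given by the lemma. Place $i^*$ first in the action sequence: when she acts, she selects her rank-$1$ item, which is exactly $M^*(i^*)$ by construction. Let $G'$ be the complete bipartite graph $G$ with $i^*$ and $M^*(i^*)$ removed; the restriction $M^* \setminus \{(i^*,M^*(i^*))\}$ is a maximum-weight matching of $G'$, so by the inductive hypothesis some action sequence $\sigma$ on $[n]\setminus\{i^*\}$ constructs it in $G'$. Prepending $i^*$ to $\sigma$ gives an action sequence on $[n]$ that constructs $M^*$ in $G$, as the choices made by the remaining agents are identical to those they would make in $G'$ (since item $M^*(i^*)$ is permanently removed from consideration).

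The main obstacle is handling equal weights in the swap lemma. If one assumed strictly distinct edge weights per agent, then $w(M')>w(M^*)$ would directly contradict optimality of $M^*$. With ties, the cycle swap only guarantees $w(M')=w(M^*)$, so the original $M^*$ need not be suboptimal; the right conclusion is weaker, namely that we can replace $M^*$ by an equally good matching in which at least one agent from the cycle is matched to her rank-$1$ item, which is exactly what the induction needs.
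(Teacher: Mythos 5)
Your proof is correct and takes essentially the same route as the paper, which cites this result from \cite{CR23} and sketches exactly this argument: a cycle-rotation lemma showing some maximum-weight matching assigns an agent her rank-1 item, then induction after placing that agent first, with ties handled by observing the rotation preserves optimality (the paper instead routes tie-breaking through maximum-weight Pareto-optimal matchings). One cosmetic point: the inductive hypothesis only guarantees a sequence constructing \emph{some} maximum-weight matching of $G'$, not necessarily the restriction of $M^*$, but this suffices because adjoining $(i^*,M^*(i^*))$ to any maximum-weight matching of $G'$ again yields a maximum-weight matching of $G$.
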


In this work, we develop a query-efficient algorithm (Algorithm~\ref{alg:matching-poly}) to compute a welfare-optimal action sequence for \osm\ instances. We state our main result here and prove it towards the end of this section.

\begin{restatable}{theorem}{matchingPoly}\label{thm:matching-poly}
 For $n$-agent \osm\ instances with query access, Algorithm~\ref{alg:matching-poly} uses $n^5$ queries to compute a welfare-optimal action sequence. 
\end{restatable}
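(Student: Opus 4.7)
The plan is to extend the strategy sketched in Section~\ref{sec:intuition} to the general case where edges incident to an agent may have equal weights. I would maintain a proxy graph $G' = K_{n,n}$ equipped with a proxy weight function $w'$ and proxy rankings $\{r'_i\}_i$, initialized with $w'(e) = \infty$ and ranks unspecified, together with a set $E^* \subseteq [n]\times[n]$ of edges whose true weight and rank have already been learned. The central invariant will be that $w'$ over-estimates $w$ coordinatewise (with equality on $E^*$) and that the proxy rankings are consistent with $w'$ in the sense described in Section~\ref{sec:prelim}. This invariant is what will let me invoke Theorem~\ref{thm:matching-proxy} at termination to conclude welfare-optimality in $G$ from welfare-optimality in $G'$.

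The main loop proceeds as follows. Compute a maximum-weight matching $M'$ in $G'$ with respect to $w'$ and, using the constructive argument behind Theorem~\ref{thm:posd}, produce an action sequence $\pi$ that realizes $M'$ inside $G'$. Simulate $\pi$ by issuing queries one agent at a time. As long as ${\sf Query}(S|_{k-1}\text{'s next agent})$ returns the edge of $M'$ with its predicted proxy value, continue. The first time a query returns something inconsistent with $G'$, stop and update: either the returned edge $(i,j)$ was previously unknown, in which case we add it to $E^*$ with its true value; or $(i,j) \in E^*$ but differs from the predicted $(i,M'(i))$, in which case we can safely lower $w'(i,M'(i))$ to the freshly observed value $v$ (the query guarantees $v < w'(i,M'(i))$) and refine $r'_i$. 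Each update keeps the invariant $w' \ge w$ intact. If $\pi$ runs to completion with no inconsistency, then $M' \subseteq E^*$, so $M'$ is a maximum-weight matching of $G$ and $\pi$ is welfare-optimal; this is the ``condition of welfare-optimality'' foreshadowed by Lemma~\ref{lem:optimal}.

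The query complexity is controlled via a potential function of the form
\[
\Phi(G') = \sum_{e \in [n]\times[n]} \phi_w(e) + \sum_{i \in [n]} \phi_r(i),
\]
where $\phi_w(e)$ counts how many further distinct decreases $w'(e)$ may still undergo, and $\phi_r(i)$ measures how much of agent $i$'s true ranking is still unknown. Because the updated $w'(e)$ is always either $\infty$ or the true weight of some edge incident to the same agent, $\phi_w(e)$ starts at $O(n)$ and never exceeds that, giving a contribution of $O(n^3)$; similarly $\phi_r$ contributes $O(n^2)$. Each round of simulation issues at most $n$ queries before either completing (in which case we output $\pi$) or triggering an update that strictly decreases $\Phi$ by at least one. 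Multiplying $O(n^3)$ potential units by at most $n$ queries per unit yields the desired $O(n^5)$ bound; absent ties, $\phi_r$ vanishes and the count drops to $n^4$, matching Section~\ref{sec:intuition}.

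The main obstacle is the tie case: a query resolving a tie among edges with equal proxy weight may neither insert a new edge into $E^*$ nor strictly lower any $w'(e)$, so the naive weight-based potential from the distinct-weights analysis is insufficient. The hard technical step will be to define $\phi_r$ precisely so that every such ``rank-only'' update strictly reduces $\Phi$, while simultaneously keeping the invariant strong enough that the structural theorem (Theorem~\ref{thm:matching-proxy}) applies when the loop terminates. A secondary but necessary subtlety is ensuring that a realizing action sequence $\pi$ for the proxy-optimal $M'$ can always be constructed under ties in $w'$, so that the outer loop remains well defined; this should follow by carefully breaking ties in $w'$ using $r'$ whenever invoking the inductive argument behind Theorem~\ref{thm:posd}.
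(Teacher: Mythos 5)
Your overall architecture matches the paper's: a proxy graph $G'$ with an over-estimating weight function $w'$, a set $E^*$ of learned edges, a simulate-$\pi$-and-update-on-first-inconsistency loop, and a potential function bounding the number of rounds. The termination condition (if $\pi$ completes consistently then $M'\subseteq E^*$ and $w'\geq w$ certifies optimality) is also the same as Lemma~\ref{lem:optimal}, and your remark about realizing $M'$ under ties by breaking ties in $w'$ via $r'$ is exactly the paper's ${\sf MWPO}$ subroutine. However, the step you yourself flag as ``the hard technical step'' --- making the tie case strictly decrease a well-defined potential --- is precisely where the proof lives, and you leave it open, so the proposal does not yet constitute a proof.

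Two concrete problems. First, your update rule asserts that when a query returns $(i,j)\in E^*$ with $(i,j)\neq (i,M'(i))$, the observed value satisfies $v < w'(i,M'(i))$; under ties this is only $v \leq w'(i,M'(i))$, and in the equality case you can neither lower any $w'$ nor enlarge $E^*$ --- the algorithm must instead swap $r'_i(j)$ and $r'_i(M'(i))$ (the paper's case (C4), line~\ref{step:p3}) and one must prove that this swap strictly decreases $\sum_{j}|r'_i(j)-r_i(j)|$ (Case~3 of Lemma~\ref{lem:progress}). Second, your \emph{additive} potential $\Phi=\sum_e\phi_w(e)+\sum_i\phi_r(i)$ need not decrease at every update: after a weight decrease the proxy ranking must be re-sorted to remain consistent with $w'$ (line~\ref{step:rank}), and this re-sorting can \emph{increase} the rank term by up to $\Theta(n^2)$, swamping the unit decrease in $\phi_w$. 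The paper sidesteps this by ordering the three quantities $\bigl(|E^*|,\ \sum|w'-w|,\ \sum|r'-r|\bigr)$ lexicographically, so a higher-priority improvement is permitted to reset the lower-priority terms; the iteration count is then $n^2$ edges times $n$ possible proxy values per edge times $n$ rank-fixing rounds per value, i.e.\ $n^4$ iterations of $n$ queries each. Relatedly, your own arithmetic is inconsistent --- an additive potential of size $O(n^3)$ at $n$ queries per unit gives $O(n^4)$, not the $n^5$ you state --- which signals that the interleaving of rank-only progress with weight progress has not actually been pinned down.
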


It is relevant to note that our result in Theorem~\ref{thm:matching-poly} is an alternate proof of Theorem~\ref{thm:posd}, first proved in \cite{CR23}. Our algorithm and its analysis is based on a potential-function based argument. We begin by stating a crucial property (in Theorem~\ref{thm:matching-proxy}) that connects maximum-weight matchings to action sequences. This property works at the core of (the correctness of) our algorithm; we re-write and prove it in Lemmas~\ref{lem:invariant} and \ref{lem:optimal}. 

\begin{theorem} \label{thm:matching-proxy}
    Consider two complete weighted bipartite graphs $G,G'=K_{n,n}$ with weight functions $w,w'$ and rank functions $\{r_i\}_i, \{r'_i\}_i$ respectively. Let $E^* \subseteq [n]^2$ be a subset of edges such that 
        $$ w'(e)=
      \begin{cases}
      = w(e), \ \ \text{if} \ e \in E^*\\
      \geq w(e), \ \ \text{otherwise}
       \end{cases} $$
and $w'(i,j_1) > w'(i,j_2)$ implies $r'_i(j_1) < r'_i(j_2)$ for all $j_1,j_2 \in [n]$.

Let $M^*$ be a maximum-weight matching using the edges in $E^*$, $M' \in G'$ be a maximum-weight PO matching, and $\pi$ be the (welfare-optimal) action sequence corresponding to $M'$ in $G'$. 

Then, there exists a pair $(M',\pi)$ such that $\SW(\pi;G)=w'(M')=w'(M^*)$, and this $\pi$ is welfare-optimal in $G$ as well. 
\end{theorem}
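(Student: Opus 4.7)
The plan is to exhibit a specific max-weight PO matching $M'$ in $G'$ together with a welfare-optimal action sequence $\pi$ for $M'$ in $G'$ such that running $\pi$ on $G$ also yields a matching of true weight $w'(M')$. Once such a $\pi$ is in hand, its welfare-optimality in $G$ is automatic: since $w'(e)\ge w(e)$ for every edge and $M'$ is $w'$-maximal, every action sequence $\sigma$ satisfies
\[
\SW(\sigma;G)=w(M(\sigma;G))\le w'(M(\sigma;G))\le w'(M'),
\]
so any $\pi$ attaining this bound in $G$ is welfare-optimal in $G$ and simultaneously delivers the first claimed equality $\SW(\pi;G)=w'(M')$.

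The pivotal step is to choose $M'$ so that $M'\subseteq E^*$; this also proves the second equality $w'(M')=w(M')=w'(M^*)$. Among all max-weight PO matchings in $G'$, I would take one using the most $E^*$-edges and argue by alternating-cycle exchange against $M^*$. The symmetric difference $M'\triangle M^*$ decomposes into even alternating cycles, and each such cycle has zero $w'$-excess (by the combined maximality of $M'$ in $G'$ and of $M^*$ within $E^*$), so swapping along any cycle preserves $w'$-weight. The consistency condition $w'(i,j_1)>w'(i,j_2)\Rightarrow r'_i(j_1)<r'_i(j_2)$ forces each swap on a zero-excess cycle to preserve every agent's $r'_i$-rank of her match, hence to preserve Pareto-optimality. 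Iterating these swaps pulls every edge of $M'$ into $E^*$.

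With $M'\subseteq E^*$ secured, I would build $\pi$ via the standard greedy decomposition for max-weight bipartite matchings: at every step some unscheduled agent $i$ has $M'(i)$ as her $r'_i$-top currently-available item, for otherwise the top available items of the remaining agents would form an augmenting alternating structure contradicting $M'$'s $w'$-maximality. Appending such agents one by one produces a $\pi$ realizing $M'$ in $G'$. For the simulation of $\pi$ on $G$, I would induct on the step $t$: with $i=\pi(t)$, since $(i,M'(i))\in E^*$ we have $w(i,M'(i))=w'(i,M'(i))$, and for any still-available item $k$, $w(i,k)\le w'(i,k)\le w'(i,M'(i))=w(i,M'(i))$. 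Hence $M'(i)$ is a $w$-top available item for agent $i$; even if $r_i$'s tie-breaking in $G$ steers her to some $j_t\ne M'(i)$, that $j_t$ still satisfies $w(i,j_t)=w(i,M'(i))$. Summing over $t$ gives $\SW(\pi;G)=w(M')=w'(M')$.

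The main obstacle is the cycle-exchange step: arguing that the alternating cycles in $M'\triangle M^*$ have zero $w'$-excess and that swapping preserves Pareto-optimality simultaneously. The first subclaim relies on the joint maximality of $M'$ in $G'$ and $M^*$ inside $E^*$, while the second is precisely where the rank–weight consistency between $w'$ and $r'$ becomes indispensable; without it, a zero-excess swap could in principle degrade some agent's $r'_i$-rank and break the PO property. Once the structural choice $M'\subseteq E^*$ is in place, the remainder — the upper bound, the greedy construction of $\pi$, and the inductive simulation on $G$ — is essentially routine.
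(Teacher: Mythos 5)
Your pivotal step fails, and it fails at exactly the point where the theorem's hypotheses are too weak to support the conclusion you are trying to extract from them. You claim every alternating cycle in $M'\triangle M^*$ has zero $w'$-excess ``by the combined maximality of $M'$ in $G'$ and of $M^*$ within $E^*$.'' Maximality of $M^*$ \emph{within $E^*$} gives no control over a cycle whose $M'$-edges leave $E^*$: swapping $M^*$ toward $M'$ along such a cycle produces a matching that is not contained in $E^*$, so it is not a candidate over which $M^*$ was maximized, and the excess can be strictly positive. Concretely, take $n=2$ with $w(1,a)=9$, $w(1,b)=1$, $w(2,a)=10$, $w(2,b)=8$, $E^*=\{(1,a),(2,b)\}$, and $w'=w$ except $w'(1,b)=9$ (a legal overestimate since $(1,b)\notin E^*$). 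Then $w'(M^*)=17$ but the unique maximum-weight matching in $G'$ has $w'(M')=19$, so $w'(M')=w'(M^*)$ is simply false and no choice of $M'$ can be pulled into $E^*$. In other words, the statement is not provable from the hypotheses as listed; the paper itself does not prove it in this self-contained form. What the paper actually establishes is the conditional direction only (Lemma~\ref{lem:optimal}: \emph{if} $\SW(\pi;G)=w'(M')=w'(M^*)$, \emph{then} $\pi$ is welfare-optimal --- this is your correct opening paragraph, and it coincides with Corollary~\ref{cor:wmsw}), and it obtains the \emph{existence} of such a triple only as the terminal state of Algorithm~\ref{alg:matching-poly}, via the progress/potential-function argument of Lemmas~\ref{lem:while-loop-conditions} and~\ref{lem:progress}, not by a static exchange argument.

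There is a second gap in the simulation step, even if one grants $M'\subseteq E^*$. When a tie sends agent $i=\pi(t)$ to some $j_t\neq M'(i)$ with $w(i,j_t)=w(i,M'(i))$, agent $i$'s own contribution is preserved, but item $j_t$ is consumed; the later agent $i'$ with $M'(i')=j_t$ then finds her intended item gone, your inductive hypothesis (``$M'(i')$ is available and $w'$-top'') no longer holds, and her best remaining item can have strictly smaller value, so ``summing over $t$'' does not give $\SW(\pi;G)=w(M')$. This is precisely the difficulty the paper flags (``the treatment of equal weights considerably complicates matters'') and is why the algorithm, upon detecting such a deviation (conditions (C3)/(C4)), updates $w'$ or the proxy ranks $r'_i$ and recomputes $(M',\pi)$ rather than concluding in one pass.
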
  

We will often refer $G'$ as the proxy graph (to $G$) with proxy weight $w'$ and rank functions $r'_i$'s. On the other hand, we will refer $w$ and $r_i$'s (of $G$) as true weight and rank functions, respectively.

\paragraph{Overview of our algorithm:} On input a query access to an $n$-agent \osm\ instance over $G=K_{n,n}$, Algorithm~\ref{alg:matching-poly} makes queries of the form ${\sf Query}(i,S)$ to an agent $i \in [n]$ for an action (sub)sequence $S \in \Snoti$ in an attempt to learn the true weight and rank functions. It progressively learns the agent-item preferences and maintains a proxy graph $G' = K_{n,n}$ with proxy edge-weight function $w'$ and proxy agent rank information $r'_i$'s.  We give an overview of Algorithm~\ref{alg:matching-poly} in the following. 

\begin{itemize}
    \item Phase~1 and Phase~2 of Algorithm~\ref{alg:matching-poly} can be thought of as pre-processing that learns the bare minimum information to get started. Phase~1 learns the most-favorite items for every agent and Phase~2 executes the action sequence $(1,2, \dots, n)$ so as to ensure that the set of edges whose weights are learnt contains a perfect matching. 
    
    \item We maintain a set $E^*$, that contains all the edges whose weights are learnt during the execution of Algorithm~\ref{alg:matching-poly}. We maintain a maximum-weight perfect matching $M^* \in G$ that only uses the edges in $E^*$ and denote its weight by $w'(M^*)$.
  
     \item We maintain a proxy graph $G'=K_{n,n}$ with rank information $\{r'_i\}_{i \in [n]}$ and edge-weights $\{w'(i,j)\}_{i,j}$ such that proxy edge-weights are always an over-estimation of the true edge-weights; see Lemma~\ref{lem:invariant}.

     \item We run the protocol ${\sf MWPO(G')}$ that outputs a maximum-weight PO matching $M'$ in $G'$ and its corresponding action sequence $\pi$, i.e., $M(\pi;G')=M'$. We write $w'(M')$ to denote the weight of matching $M'$ in $G'$; see the end of Section~\ref{sec:mwpo} for more details.
    
    \item In the beginning, we set $w'(i,j)=v_i(\emptyset) = \max_{j \in [n]}w(i,j)$ for all $i,j \in [n]$. Hence, at this point, we have $w'(M') = \sum_{i \in [n]}v_i(\emptyset)$, making $w'(M') \geq w'(M^*)$.
    
    \item As long as we have $w'(M') \geq w'(M^*)$, we simply run the action sequence $\pi$ (corresponding to $M'$ in $G'$) in the original instance using the query oracle. This is performed by executing the \wl\  in lines~\ref{step:while_start}-\ref{step:while_end} of Algorithm~\ref{alg:matching-poly}. The queries made during the execution of this \wl\ lead to new information, and hence, we update the set $E^*$ and graph $G'$ accordingly; see Lemma~\ref{lem:while-loop-conditions} and \ref{lem:progress}. And, this in turn, changes $M^*,M',$ and $\pi$ as well.

    
    \item We obtain $\SW(\pi;G)$ by querying the agents according to the sequence $\pi$, we denote it in the algorithm as ${\sf Query}(\SW(\pi;G))$. We prove that as soon as we find a matching $M' \in G'$ such that $\SW(\pi;G)=w'(M')=w'(M^*)$, then $\pi$ must be welfare-optimal for $G$ as well; see Lemma~\ref{lem:optimal}. We call this as a \emph{condition of welfare-optimality for $\pi$}.

    \item  The careful updates performed by our algorithm enable us to produce a \emph{potential-function} based argument to measure the \emph{progress} towards learning the underlying information of $G$. It is dependent on the size $|E^*|$, distance $\sum_{j \in [n]}|w'(i,j)-w(i,j)|$ between $w'$ and $w$, and distance $\sum_{j \in [n]}|r'_i(j)-r_i(j)|$ between $r'$ and $r$.
    
    We prove an upper bound of $n^5$ on the number of queries required by Algorithm~\ref{alg:matching-poly} to find a welfare-optimal action sequence in $G$; see Theorem~\ref{thm:matching-poly}.
\end{itemize}

When we execute the welfare-optimal action sequence $\pi$ (corresponding to a maximum-weight matching $M'$ in $G'$) in $G$ using the query oracle, it guides us so that we learn new information and progressively become closer to the ground truth of $G$. The goal is to make $G'$ so close to $G$ such that a welfare-optimal action sequence in $G'$ also becomes welfare-optimal in $G$; see Theorem~\ref{thm:matching-proxy}.

Throughout this section, we will adhere to the following notations. At the beginning of an arbitrary iteration of the \wl\ in lines~\ref{step:while_start}-\ref{step:while_end} during the execution of Algorithm~\ref{alg:matching-poly}, we write 

- $E^*$ to denote the set of known edges, 

- $M^*$ to denote a maximum-weight matching using the edges in $E^*$, with weight $w'(M^*)$, 

- $M' \in G'$ to denote a maximum-weight PO matching with weight $w'(M')$, and

- $\pi$ to denote the action sequence corresponding to $M'$ in $G'$, i.e., $M(\pi;G')=M'$. \\

From now on, whenever we mention the \wl, we would mean the \wl\ in lines~\ref{step:while_start}-\ref{step:while_end} of Algorithm~\ref{alg:matching-poly}. We begin our analysis with Lemma~\ref{lem:invariant} where we prove that the proxy edge-weights in $G'$ are always an over-estimation of the true edge-weights in $G$, and they coincide for the edges that are already learnt, i.e., for the edges in the set $E^*$.

\begin{algorithm}[p]
 			{\bf Input:} Query access to an $n$-agent \osm\ instance over a weighted graph $G = K_{n,n}$ 
    
 			{\bf Output:} An action sequence $\pi$
 			\caption{An efficient algorithm to find welfare-optimal action sequence for \osm}	
 			\label{alg:matching-poly}
 			\begin{algorithmic}[1]
                \STATE Initialize $w'(i,j) \leftarrow \infty$ for every $i,j \in [n]$, and a set $E^* \leftarrow \emptyset$;
                \STATE Initialize  $r'_i:=[r'_i(1), r'_i(2), \dots, r'_i(n)] \leftarrow [1,2, \dots, n]$ for every $i \in [n]$;
               
                \STATE $G' \leftarrow K_{n,n}$ with edge weights $\{w'(i,j)\}_{i,j}$ and rank functions $\{r'_i\}_i$;\\
                 \nonumber ----------\textit{Phase 1: Learning the most-favorite item for every agent}---------\\
 				\FOR{$i \leftarrow 1$ to $n$}
                    \STATE $(v,t) \leftarrow  {\sf Query}(i,\emptyset)$;
                    \STATE $w'(i,j) \leftarrow v \ \text{for all} \ j \in [n]$; $E^* \leftarrow E^* \cup \{(i,t)\}$; \label{step:w1}
                    \STATE Swap $r'_i(t)$ and $r'_i(1)$;
                \ENDFOR \\
             \nonumber ----------\textit{Phase 2: Ensuring the existence of a perfect matching in $E^*$}---------\\
              
                    \FOR{$i \leftarrow 2$ to $n$} \label{step:perfect}
                        \STATE $(v,t) \leftarrow  {\sf Query}(i,O_{i-1})$; 
                        \STATE $w'(i,t) \leftarrow v$; $E^* \leftarrow E^* \cup \{(i,t)\}$; \label{step:w2}
                            \IF{$v=w'(i,1)$ and $t \neq (r'_i)^{-1}(1)$}
                                 \STATE swap $r'_i(t)$ and $r'_i(2)$; 
                            \ELSIF{$v \neq w'(i,1)$}
                                 \STATE swap $r'_i(t)$ and $r'_i(n)$; 
                            \ENDIF
                \ENDFOR \label{step:perfect_stop} 
                \STATE  $M^* \leftarrow$ a maximum-weight  matching using the edges in $E^*$;
                  \STATE  $(M',\pi) \leftarrow {\sf MWPO}(G')$;  \\
                  
                  \nonumber ----------\textit{Phase 3: Executing $\pi$ in $G$ using the query oracle}---------\\
                 
                \WHILE{$w'(M') \geq w'(M^*)$} \label{step:while_start}
                    \IF{${\sf Query}(\SW(\pi))=w'(M')=w'(M^*)$} \label{step:c1} 
                    \RETURN $\pi$; 
                    \ENDIF 
                    \STATE Set $(a_1,a_2, \dots,a_n) \leftarrow (\pi(1),\pi(2), \dots, \pi(n))$;
                     \STATE $i \leftarrow 0$;
                     \REPEAT
                     \STATE $i \leftarrow i+1$;
                      \STATE $(v,t) \leftarrow  {\sf Query}(a_i,\pi|_{i-1})$;
                      \UNTIL{edge $(a_i,t) \notin E^* \cap M'$}  
                            \IF{edge $(a_i,t) \notin E^*$} \label{step:c2}
                            \STATE  $E^* \leftarrow E^* \cup \{(a_i,t)\}$; $w'(a_i,t) \leftarrow v$; \label{step:w3p1}
                            \ELSIF{edge $(a_i,t) \in E^* \setminus M'$ and $v<w'(a_i,M'(a_i))$} \label{step:c3}
                            \STATE  $w'(a_i,M'(a_i)) \leftarrow v$; \label{step:p2} 
                            \ELSIF{edge $(a_i,t) \in E^* \setminus M'$ and $v=w'(a_i,M'(a_i))$} \label{step:c4}
                            \STATE swap $r'_{a_i}(t)$ and $r'_{a_i}(M'(a_i))$; \label{step:p3}
                            \ENDIF
                             \STATE Update $r'_i$ such that $w'(i,j_1) > w'(i,j_2)$ implies $r'_i(j_1) < r'_i(j_2) \ \forall \ j_1,j_2 \in [n]$; \label{step:rank}
                         \STATE  $M^* \leftarrow$ a maximum-weight  matching using the edges in $E^*$;
                \STATE  $(M',\pi) \leftarrow {\sf MWPO}(G')$;     
                \ENDWHILE \label{step:while_end}
   		\end{algorithmic}
 	\end{algorithm}

\begin{lemma} \label{lem:invariant}
    On input $n$-agent \osm\ instances, Algorithm~\ref{alg:matching-poly} maintains the following invariant on the proxy edge-weights throughout its execution: for any edge $e \in [n]^2$, 
    $$\text{the proxy edge-weight,} \ w'(e)=
\begin{cases} \label{eq:overestimate}
= w(e), \ \ \text{if} \ e \in E^*\\
\geq w(e), \ \ \text{otherwise}
\end{cases}  $$ 

Moreover, we have  $w'(i,j_1) > w'(i,j_2)$ implies $r'_i(j_1) < r'_i(j_2)$ for all $j_1,j_2 \in [n]$. 
\end{lemma}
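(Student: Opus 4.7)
The plan is to proceed by induction on the sequence of weight/rank updates performed by Algorithm~\ref{alg:matching-poly}. The base case, immediately after the initialization in lines~1--3, is trivial: $E^*=\emptyset$ makes the $e\in E^*$ case of Part~1 vacuous, $w'(e)=\infty\geq w(e)$ gives the other case, and all proxy weights being equal makes Part~2 vacuous as well. For the inductive step I will treat Phases~1, 2, and 3 in turn and verify that every update preserves both parts of the invariant.

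Phase~1 and Phase~2 are the straightforward parts. In Phase~1, the query ${\sf Query}(i,\emptyset)$ returns $(v,t)$ with $v=\max_{j}w(i,j)=w(i,t)$, so setting $w'(i,j)\gets v$ for every $j$ upper-bounds all true weights for agent $i$ and matches the truth on the newly added edge $(i,t)\in E^*$; all proxy weights for $i$ being equal afterwards keeps Part~2 vacuous. In Phase~2, the query again returns the true weight $v=w(i,t)$ of the picked edge, so $w'(i,t)\gets v$ together with $E^*\gets E^*\cup\{(i,t)\}$ preserves Part~1. The rank swaps in lines~12--15 are tailored to Part~2: when $v<w'(i,1)$, placing $t$ at rank $n$ puts it strictly below all other items (which still have weight $v_i(\emptyset)$); when $v=w'(i,1)$, the relevant inequality is non-strict and Part~2 is vacuous on these edges, so the swap cannot violate it.

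The substance is in Phase~3, whose loop body splits into the three subcases marked by lines~\ref{step:c2},~\ref{step:c3},~\ref{step:c4}. Case~A (line~\ref{step:c2}) sets $w'(a_i,t)=v=w(a_i,t)$ on the freshly added edge, which is direct; Case~C (line~\ref{step:c4}) changes only ranks, so Part~1 is automatic. The hard part will be Case~B (line~\ref{step:c3}), because it is the only situation where a proxy weight is overwritten by the value of a \emph{different} edge: here we update $w'(a_i,M'(a_i))\gets v=w(a_i,t)$ and must show $v\geq w(a_i,M'(a_i))$ so that Part~1 is preserved on this outside-$E^*$ edge. My plan is to exploit the termination condition of the repeat--until loop: it exits only at the first index $i$ whose picked edge leaves $E^*\cap M'$, so for every $j<i$ the returned edge $(a_j,t_j)$ lies in $M'$, which forces $t_j=M'(a_j)$ since $M'$ is a matching. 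Consequently, when $a_i$ acts in $G$, the items $M'(a_1),\dots,M'(a_{i-1})$ have already been removed while $M'(a_i)$ is still available, and since $t$ is the highest-valued available item for $a_i$ in $G$, we conclude $v=w(a_i,t)\geq w(a_i,M'(a_i))$, exactly as needed. Part~2 is then re-established at the end of each Phase~3 iteration by the explicit reconciliation performed in line~\ref{step:rank}.
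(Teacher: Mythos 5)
Your proof is correct and follows essentially the same route as the paper's: the only nontrivial update site is line~\ref{step:p2}, and both arguments hinge on the observation that the repeat--until loop breaks at the \emph{first} deviation from $E^*\cap M'$, so all earlier picks coincide with $M'$ and the item $M'(a_i)$ is still available when $a_i$ acts. The one cosmetic difference is that you obtain $v=w(a_i,t)\geq w(a_i,M'(a_i))$ directly from the greedy rule in $G$, whereas the paper routes the same inequality through the chain $w(i,j)\leq v=w(i,\bar{j})=w'(i,\bar{j})\leq w'(i,j)$ using $\bar{e}\in E^*$ and the ordering in $G'$; both are valid and yield the same conclusion.
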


\begin{proof}
     Fix an edge $e = (i,j) \in [n]^2$. First, note that the stated claim about rank information is true by construction; see line~\ref{step:rank}.
     
     Now, observe that, whenever $e$ is returned in a query during the execution of Algorithm~\ref{alg:matching-poly}, we set $w'(e)=w(e)$ and add $e$ to the set $E^*$; see lines~\ref{step:w1}, \ref{step:w2}, and \ref{step:w3p1}. 
      
     Therefore, to complete our proof, all we need to show is that $w'(e) \geq w(e)$ if $e \notin E^*$  throughout the execution Algorithm~\ref{alg:matching-poly}. Note that, it holds true in the beginning since $w'(i,j)$ is set to be $v_i(\emptyset)= \max_{j \in [n]} w(i,j)$ in line~\ref{step:w1}. The only other time when $w'(e)$ is updated even when $e$ is not returned in a query is in line~\ref{step:p2}. We consider this particular iteration of the while-loop with an action sequence, say $\pi$. Recall that $\pi$ induces the maximum-weight matching $M' \in G'$ and the while-loop queries agents according to the ordering in $\pi$ (i.e., it runs $\pi$ in  $G$ using the query oracle). 

    Now, when we make the query for agent $i$ in accordance to $\pi$, assume that edge $\bar{e} = (i,\bar{j}) \in E^* \setminus M'$ is returned instead of $e = (i,j) = (i,M'(i))\in M'$. Note that, we break from an iteration of the while-loop as soon as any of the three else-if conditions are met in lines~\ref{step:c2}, \ref{step:c3}, or \ref{step:c4} (where the edge that is returned is either not in $E^*$ or not in $M'$). Therefore, in this case, if  the considered iteration of the while-loop did not break and proceeded to reach agent $i$, we know that all the preceding queries returned for $\pi$ must be edges in $M'$. Therefore, $e$ and $\bar{e}$ are still available for agent $i$ on her turn in $\pi$. Since, $\pi$ is an action sequence corresponding to $M'$ in $G'$, we have
    \begin{align*}
        w'(i,\bar{j}) < w'(i,j) \ \ \text{or,} \ \ (w'(i,\bar{j}) = w'(i,j) \ \text{and} \ r_i(j) < r_i(\bar{j}))
    \end{align*}
    Since, $\bar{e}$ (with query value $v = w(i,\bar{j})$) is returned instead of $e$, we have
     \begin{align*}
        w(i,j) < w(i,\bar{j}) \ \ \text{or,} \ \ (w(i,\bar{j}) = w(i,j) \ \text{and} \ r_i(\bar{j}) < r_i(j))
    \end{align*}
    And since, $\bar{e} \in E^*$  we know $w'(i,\bar{j})=w(i,\bar{j})$. In either case, we have
     \begin{align}\label{eq:invariant-decrease}
         w(i,j) \leq v= w(i,\bar{j}) = w'(i,\bar{j}) \leq w'(i,j) 
     \end{align}
 If $v < w'(i,j)$, we decrease $w'(i,j)$ to $v$ (see Step~\ref{step:p2}) and if $v=w'(i,j)$, then we swap $r_i(j)$ and $r_i(\bar{j})$ (see Step~\ref{step:p3}). In either case, the invariant $w(i,j) \leq w'(i,j)$ is maintained and we move closer to the true weight value $w(i,j)$ for the edge $(i,j)$. This completes our proof.
       \end{proof}

\begin{corollary} \label{cor:wmsw}
    For each iteration of the \wl\ in lines~\ref{step:while_start}-\ref{step:while_end}, we have $w'(M^*) \leq w'(M')$  and $\SW(\pi;G) \leq w'(M')$.
\end{corollary}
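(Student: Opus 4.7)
The plan is to derive both inequalities directly from Lemma~\ref{lem:invariant} together with the fact that, at the start of any iteration of the \wl, $M'$ is the matching returned by $\mathsf{MWPO}(G')$ so its $w'$-weight is maximum over all perfect matchings in $G'$.

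For the first inequality $w'(M^*) \leq w'(M')$, I would argue as follows. Since $G' = K_{n,n}$, every perfect matching of $G$ is also a perfect matching of $G'$; in particular $M^*$ is a perfect matching in $G'$. By Lemma~\ref{lem:invariant}, $w'(e) = w(e)$ for every $e \in E^*$, so $w'(M^*)$ is well-defined and equals the true weight $w(M^*)$. It remains to check that $M'$ has maximum $w'$-weight among \emph{all} perfect matchings of $G'$, not only among PO ones. This uses the rank--weight consistency guaranteed by Lemma~\ref{lem:invariant}: if $\widetilde M$ is any maximum $w'$-weight perfect matching in $G'$ that is not PO, any Pareto improvement (in the $r'$-ranks) gives weakly larger $w'$-weight edge-by-edge and hence a perfect matching of at least the same $w'$-weight; iterating this (the process terminates by finiteness) produces a PO matching of the same weight, so $w'(M') \geq w'(\widetilde M) \geq w'(M^*)$.

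For the second inequality $\SW(\pi;G) \leq w'(M')$, let $M := M(\pi;G)$ be the matching induced by running $\pi$ in the true graph $G$, so that $\SW(\pi;G) = w(M)$. Applying Lemma~\ref{lem:invariant} edge-wise gives $w(e) \leq w'(e)$ for every $e$, and hence $w(M) \leq w'(M)$. Since $M$ is a perfect matching of $G' = K_{n,n}$, the maximality of $w'(M')$ established in the previous paragraph yields $w'(M) \leq w'(M')$, and chaining the two bounds gives $\SW(\pi;G) \leq w'(M')$.

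The only subtle point, and the step I expect to need the most care, is the claim that a maximum-weight PO matching in $G'$ attains the overall maximum $w'$-weight among perfect matchings in $G'$. This is exactly where the rank--weight consistency part of Lemma~\ref{lem:invariant} is needed: without it, a Pareto improvement in $r'$ could in principle decrease $w'$-weight, breaking the reduction from arbitrary max-weight matchings to PO ones. Once that observation is in place, both inequalities are immediate consequences of Lemma~\ref{lem:invariant}.
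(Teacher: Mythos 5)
Your proof is correct and follows essentially the same route as the paper: both inequalities are derived from the over-estimation invariant of Lemma~\ref{lem:invariant} together with the maximality of $w'(M')$ among perfect matchings of $G'$. The extra step you flag---that a maximum-weight PO matching attains the global maximum $w'$-weight---is a detail the paper glosses over in this proof but justifies separately in its description of the ${\sf MWPO}$ routine (via the perturbed weights $w''$), and your Pareto-improvement argument using the rank--weight consistency is a valid alternative way to close that gap.
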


\begin{proof}
    Lemma~\ref{lem:invariant} shows that the proxy weights are always an over-estimation of the true weights. In any iteration of the while-loop, recall that $M'$ denotes a maximum-weight matching in $G'$ while $M^*$ denotes a maximum-weight matching using only a subset of edges of $G'$, i.e., $E^*$. This implies that we have $w'(M^*) \leq w'(M')$. Moreover, since $\pi$ induces $M'$ in $G'$, the invariant on the proxy weights in Lemma~\ref{lem:invariant} ensures that the social welfare of $\pi$ in $G$ cannot be greater than $w'(M')$, i.e., $\SW(\pi;G) \leq w'(M')$. 

Note that we obtain $\SW(\pi;G)$ by making $n$ queries to the agents according to the sequence $\pi$, we denote it in the algorithm as ${\sf Query}(\SW(\pi;G))$; see line~\ref{step:c1}.
\end{proof}

     The next result formally states how close do we need to make $G'$ to $G$ so that a welfare-optimal action sequence in $G'$ becomes welfare-optimal in $G$ as well, i.e., we state and prove a \emph{condition of welfare-optimality}.
\begin{lemma} \label{lem:optimal}
    On input $n$-agent \osm\ instances, when Algorithm~\ref{alg:matching-poly} finds a perfect matching $M' \in G'$ with $\SW(\pi;G)=w'(M')=w'(M^*)$, then $\pi$ must be welfare-optimal in $G$. 
\end{lemma}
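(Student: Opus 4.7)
I would prove this by a short chain of inequalities, showing that for an \emph{arbitrary} action sequence $\pi^* \in \mathcal{S}$, the social welfare $\SW(\pi^*;G)$ cannot exceed $w'(M')$. Since the hypothesis gives $\SW(\pi;G) = w'(M')$, welfare-optimality of $\pi$ in $G$ will follow immediately.

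The first step is to observe that for any action sequence $\pi^* \in \mathcal{S}$, the set $M(\pi^*;G)$ is a perfect matching of $K_{n,n}$, and $\SW(\pi^*;G) = w(M(\pi^*;G))$ by definition. Now I would invoke Lemma~\ref{lem:invariant}: the proxy weight function $w'$ pointwise dominates the true weight function $w$ on every edge of $K_{n,n}$. Summing the edge-wise inequality over the edges of the matching $M(\pi^*;G)$ yields
\[
\SW(\pi^*;G) \;=\; w(M(\pi^*;G)) \;\leq\; w'(M(\pi^*;G)).
\]
Next, since $M'$ is a maximum-weight matching of $G'$ under $w'$, and $M(\pi^*;G)$ is some perfect matching of the same underlying graph $K_{n,n}$, we have $w'(M(\pi^*;G)) \leq w'(M')$. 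Chaining gives $\SW(\pi^*;G) \leq w'(M')$.

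Finally, I would use the hypothesis $\SW(\pi;G) = w'(M')$ to conclude $\SW(\pi;G) \geq \SW(\pi^*;G)$ for every $\pi^* \in \mathcal{S}$, which is exactly welfare-optimality of $\pi$ in $G$.

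\textbf{Expected obstacles.} The argument is essentially a three-line inequality chain, so there is no real obstacle — the technical content is already encapsulated in Lemma~\ref{lem:invariant} (proxy weights over-estimate true weights) and in the choice of $M'$ as a maximum-weight matching in $G'$. The only point to handle carefully is that $M'$ is chosen as a maximum-weight \emph{PO} matching rather than an arbitrary maximum-weight matching; however, this restriction does not affect the inequality $w'(M(\pi^*;G)) \leq w'(M')$, because $M'$ is still a maximum-weight matching in $G'$ under $w'$. Note also that the equality $w'(M') = w'(M^*)$ in the hypothesis is not strictly needed for the optimality argument itself; it plays its role inside Theorem~\ref{thm:matching-proxy} and in certifying termination of the \wl, but the optimality conclusion only requires the two-way equality $\SW(\pi;G) = w'(M')$.
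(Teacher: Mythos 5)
Your proof is correct, and it takes a slightly different (and arguably cleaner) route than the paper's. The paper's proof sandwiches the optimum: it uses Corollary~\ref{cor:wmsw} together with the facts that $w'(M^*)=w(M^*)\le \OPT(G)$ (all edges of $M^*$ lie in $E^*$, where $w'=w$) and $\OPT(G)\le w'(M')$ (by the over-estimation invariant), so the hypothesis $w'(M')=w'(M^*)$ pins this common value down to the maximum social welfare of $G$, which $\SW(\pi;G)$ then attains. You instead bypass $M^*$ entirely and compare $\pi$ directly against an arbitrary action sequence $\pi^*$ via $\SW(\pi^*;G)=w(M(\pi^*;G))\le w'(M(\pi^*;G))\le w'(M')=\SW(\pi;G)$; the only ingredients are Lemma~\ref{lem:invariant} and the fact that $M'$ is a maximum-weight matching under $w'$ (your remark that the PO restriction is harmless here is right, since ${\sf MWPO}$ returns a matching that is maximum-weight under $w'$ and additionally PO). Your further observation that the equality $w'(M')=w'(M^*)$ is not logically needed for the optimality conclusion is also correct: its role in the paper is to certify that the common value equals $\OPT(G)$ and, more importantly, to serve as the computable termination/progress test in the \wl, since the algorithm cannot observe $\OPT(G)$ directly. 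Both arguments buy the same conclusion; yours isolates the minimal hypothesis, while the paper's version makes explicit the certificate ($M^*$) that the algorithm actually uses to detect optimality.
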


\begin{proof}
    According to Corollary~\ref{cor:wmsw}, we know that $w'(M^*) \leq w'(M')$ and $\SW(\pi;G) \leq w'(M')$ hold true for any iteration of the $\mathrm{while}$-$\mathrm{loop}$. Recall that $M^*$ is a maximum-weight matching using the known edges in $E^*$ and $w'$ is an over-estimation of the true weight function $w$ (Lemma~\ref{lem:invariant}). Therefore, when our algorithm finds two matchings $M'$  and $M^*$ such that $w'(M') = w'(M^*)$, then this value must equal the maximum social welfare value in $G$.
    
    Now, once we know the maximum social welfare value in $G$, the task is to ascertain the action sequence that achieves this value in $G$. Recall that $\pi$ induces $M'$ in $G'$. If $\pi$ induces a matching in $G$ such that $\SW(\pi;G)=w'(M')=w'(M^*)$, then the action sequence $\pi$ achieves the maximum social welfare in $G$, making $\pi$ welfare-optimal in $G$ as well.
\end{proof}

Next, in Lemma~\ref{lem:while-loop-conditions}, we analyse all possible events that can occur during the execution of the \wl\  in lines~\ref{step:while_start}-\ref{step:while_end} of Algorithm~\ref{alg:matching-poly}. Our careful choice of updates in these events enables us to develop a potential function to measure the progress of Algorithm~\ref{alg:matching-poly}; see Lemma~\ref{lem:progress}.
\begin{lemma} \label{lem:while-loop-conditions}
    Given an $n$-agent \osm\ instance, exactly one of the following conditions is satisfied for any iteration of the $\mathrm{while}$-$\mathrm{loop}$ in lines~\ref{step:while_start}-\ref{step:while_end} of Algorithm~\ref{alg:matching-poly}: 
    \begin{enumerate}[label=(C\arabic*)]
     \item \label{c1} $\SW(\pi;G)=w'(M')=w'(M^*)$, in which case, $\pi$ is returned; see line~\ref{step:c1}
     \item \label{c2} An edge $e \notin E^*$ is returned; see line~\ref{step:c2}
     \item \label{c3} An edge $e = (i,j) \in E^* \setminus M'$ is returned such that $w(i,j) < w'(i,M'(i))$; see line~\ref{step:c3}
    \item \label{c4} An edge $e = (i,j) \in E^* \setminus M'$ is returned such that $w(i,j)=w'(i,M'(i))$; see line~\ref{step:c4}
    \end{enumerate}
    Moreover, in the event of (C3) or (C4), edge $(i,M'(i)) \in M'$ is available for the agent $i$ on her turn while executing $\pi$ in $G$ using the query oracle.
    \end{lemma}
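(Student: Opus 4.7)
The plan is to perform a case analysis on the structure of a single iteration of the \wl, following the branching in the pseudocode. The four listed conditions are mutually exclusive by construction, since the \texttt{if}/\texttt{else-if} chain on lines~\ref{step:c1} and \ref{step:c2}--\ref{step:c4} picks at most one branch; the real content is to show that at least one of them must fire in each iteration, and then to verify the final availability claim.

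First I would dispatch the easy case: if the three-way equality in line~\ref{step:c1} holds, (C1) triggers and the iteration returns $\pi$. Assuming henceforth it fails, I would argue that the repeat-until loop must break with some returned edge $(a_i,t) \notin E^* \cap M'$ strictly before exhausting all $n$ agents. If instead every returned edge $(a_k,t_k)$ lay in $E^* \cap M'$, then, since $(a_k,t_k)\in M'$ and $M'$ is a matching, each $a_k$ would be forced to pick $t_k = M'(a_k)$, giving $M(\pi;G)=M'$ with $M'\subseteq E^*$. Lemma~\ref{lem:invariant} applied to the edges of $M'$ would then yield $\SW(\pi;G)=w'(M')$, and since $M^*$ is a heaviest matching contained in $E^*$ while the \wl-condition gives $w'(M')\geq w'(M^*)$, we would also conclude $w'(M')=w'(M^*)$---contradicting the failure of (C1).

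The main obstacle is then to argue that once the repeat-until loop breaks at some agent $a_i$, the returned edge $(a_i,t)$ with value $v = w(a_i,t)$ falls into exactly one of (C2), (C3), (C4), and not the unhandled case $(a_i,t)\in E^*\setminus M'$ with $v > w'(a_i,M'(a_i))$. The key observation, which simultaneously yields the availability claim, is that agents $a_1,\dots,a_{i-1}$ already picked their $M'$-partners in $G$ (since their returned edges lay in $E^*\cap M'$), so $M'(a_i)$ remains available to $a_i$ in both $G$ and $G'$ on her turn. Because $\pi$ is an action sequence constructing $M'$ in the proxy graph, when $a_i$ acts in $G'$ she weakly prefers $M'(a_i)$ to the also-available $t$, giving $w'(a_i,t) \leq w'(a_i,M'(a_i))$. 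Since $(a_i,t)\in E^*$, Lemma~\ref{lem:invariant} gives $w'(a_i,t)=w(a_i,t)=v$, so $v\leq w'(a_i,M'(a_i))$. This rules out the unhandled case and places us in exactly one of (C3) or (C4) according to strict inequality vs.~equality, completing the proof.
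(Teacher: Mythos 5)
Your proof is correct and follows essentially the same route as the paper's: both reduce the claim to showing that failure of (C1) forces some query to return an edge outside $E^*\cap M'$, and both use the fact that all preceding queries returned $M'$-edges to obtain the availability of $(a_i,M'(a_i))$. If anything, your version is slightly more self-contained, since you unify the paper's two subcases ($w'(M^*)<w'(M')$ versus $\SW(\pi;G)<w'(M')=w'(M^*)$) into a single contradiction and explicitly rule out the unhandled case $v>w'(a_i,M'(a_i))$ via agent $a_i$'s weak preference for $M'(a_i)$ in $G'$, a step the paper delegates to the proof of Lemma~\ref{lem:invariant}.
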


\begin{proof}
    Using Corollary~\ref{cor:wmsw}, we know that $w'(M^*) \leq w'(M')$ and $\SW(\pi;G) \leq w(M')$ hold true for any iteration of the \wl\. We will show that these two inequalities satisfies at least one of the four conditions stated. 
    
    Let us first consider the case when $\SW(\pi;G)=w'(M') = w'(M^*)$; this satisfies Condition~\ref{c1}. Using  Lemma~\ref{lem:optimal}, we know that $\pi$ is welfare-optimal in $G$, and hence, the algorithm terminates with $\pi$ as its output. Otherwise, we have the following two cases, wherein we will prove that either of Conditions~\ref{c2},\ref{c3}, or \ref{c4} must be satisfied: 
    \begin{enumerate}[label=(\roman*)] \label{cases}
    \item $w'(M^*)<w'(M')$ 
     \item $\SW(\pi;G) < w'(M') = w'(M^*)$
    \end{enumerate}
   Observe that, when Condition~\ref{c1} is not satisfied, it implies that during the course of executing $\pi$ in $G$ using the query oracle, it must return an edge $e = (i,j)$ such that either 
       \begin{itemize}
        \item $e \in M'$, but $w'(e) \neq w(e)$: this implies that $e$ was not previously in $E^*$ and we learn a new edge, satisfying Condition~\ref{c2}.
       \item $e \notin M'$, where either of Conditions~\ref{c3} or \ref{c4} is satisfied. Here, note that, since the considered iteration of the $\mathrm{while}$-$\mathrm{loop}$ did not break till it returns $e \notin M'$, we know that all preceding queries returned for $\pi$ must be edges in $M'$.
     \end{itemize}  
    Therefore, it is enough to show that occurrence of cases (i) and (ii) implies that not all queries return edges in $E^* \cap M'$. 
    
   First, it is easy to see that when we have $w'(M^*)<w'(M')$, every edge that is returned while executing $\pi$ cannot be in the set $E^*\cap M'$ (because in that case, $w'(M^*)=w'(M')$). Next, when $\SW(\pi;G)<w'(M')=w'(M^*)$, it may be the case that we have found the optimal social welfare value in $G$ (which is equal to $w'(M')=w'(M^*)$), but even then we have not found the optimal action sequence. This is because $\pi$ is not able to achieve the same social welfare in $G$ as in $G'$, i.e., $\SW(\pi;G)<w'(M')$. And hence, during the course of executing $\pi$ in $G$, it must deviate from returning the edges in $M'$ or encounter a new edge $e \notin E^*$ at least once. That is, all returned edges again cannot be in the set $E^*\cap M'$, thereby completing our proof.
   \end{proof}

Next, we show that Algorithm~\ref{alg:matching-poly} makes progress towards learning the true weight or the true rank functions in the underlying graph $G$  with every iteration of its \wl. 

\begin{lemma} \label{lem:progress}
    Given an $n$-agent \osm\ instance, every iteration of the \wl\ in lines~\ref{step:while_start}-\ref{step:while_end} of Algorithm~\ref{alg:matching-poly} leads to at least one of the following progresses: 
    \begin{itemize}
        \item The size of the set $E^*$ increases by $1$.
        \item There exists an agent $i \in [n]$ such that $w'(i,M'(i)) - w(i,M'(i))$ decreases strictly.
        \item There exists an agent $i \in [n]$ such that $\sum_{j \in [n]}|r'_i(j)-r_i(j)|$ decreases strictly.

    \end{itemize}
    The above conditions ensure that $w(M')$ keeps decreasing, while $w(M^*)$ and $\SW(\pi;G)$ keeps increasing as the algorithm progresses.
\end{lemma}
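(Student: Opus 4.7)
The plan is to apply Lemma~\ref{lem:while-loop-conditions}, which partitions the non-terminating iterations of the \wl\ into cases~\ref{c2}, \ref{c3}, and \ref{c4}, and to show that each case realizes one of the three stated progress measures.

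The two easier cases come first. In case~\ref{c2}, line~\ref{step:w3p1} adds the returned edge $(a_i,t) \notin E^*$ to $E^*$, strictly increasing $|E^*|$ by one. In case~\ref{c3}, line~\ref{step:p2} overwrites $w'(a_i,M'(a_i))$ with $v = w(a_i,t)$, and the hypothesis of~\ref{c3} gives $v < w'(a_i,M'(a_i))$ for the prior value. By inequality~\eqref{eq:invariant-decrease} in the proof of Lemma~\ref{lem:invariant}, $w(a_i,M'(a_i)) \leq v$, so the invariant is preserved and the non-negative gap $w'(a_i,M'(a_i)) - w(a_i,M'(a_i))$ strictly decreases.

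The core of the argument is case~\ref{c4}. Here line~\ref{step:p3} swaps $r'_{a_i}(t)$ and $r'_{a_i}(M'(a_i))$ while leaving all other ranks and all proxy weights fixed, so only two terms in $\sum_k |r'_{a_i}(k) - r_{a_i}(k)|$ are affected. Writing $i = a_i$, $j = t$, $a = r_i(j)$, $b = r_i(M'(i))$, $\alpha = r'_i(j)$, $\beta = r'_i(M'(i))$, I would first derive from the trigger of~\ref{c4} the ``reversed'' facts $a < b$ and $\beta < \alpha$: agent $i$ picks $j$ over $M'(i)$ in $G$ while $\pi$ picks $M'(i)$ over $j$ in $G'$ at a tied proxy weight, so the tie-breaks in $r_i$ and $r'_i$ go in opposite directions (invoking $w(i,M'(i)) \leq v = w(i,j)$ from~\eqref{eq:invariant-decrease} to handle the subcase of equal true weights). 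The change in rank distance then reduces to verifying
\[
|\alpha - a| + |\beta - b| \;>\; |\beta - a| + |\alpha - b|.
\]
A rearrangement argument gives this weakly; to push it to a strict inequality I would establish the auxiliary bound $a \leq \beta$ by noting that every item $k$ with $w(i,k) > w(i,j)$ has $w'(i,k) \geq w(i,k) > w'(i,j) = w'(i,M'(i))$ by Lemma~\ref{lem:invariant}, hence $r'_i(k) < \beta$, forcing the $a-1$ better items into positions $\{1,\ldots,\beta - 1\}$ and excluding the degenerate ``separated intervals'' configuration.

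The main obstacle lies precisely in this strict-decrease step, and in particular in handling items whose true weights are tied with $w(i,j)$ or $w(i,M'(i))$: for such items Lemma~\ref{lem:invariant} only guarantees $w'(i,k) \geq w(i,k)$, so their rank relative to $\beta$ is not a priori forced, and I would need to invoke the specific tie-breaking rules maintained by the algorithm (line~\ref{step:rank} together with the invariant already proved) to conclude the strict bound. Once all three cases yield the claimed progress, the closing monotonicity claim is immediate: $w'(M^*)$ is non-decreasing because $E^*$ only grows; $w'(M')$ is non-increasing because proxy weights only decrease (never increase); and $\SW(\pi;G)$ is non-decreasing because realigning $r'_i$ toward $r_i$ causes $\pi$ to select items of higher true weight, bringing $\SW(\pi;G)$ closer to $w'(M')$.
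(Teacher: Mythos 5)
Your decomposition is exactly the paper's: invoke Lemma~\ref{lem:while-loop-conditions}, match case~\ref{c2} to the growth of $E^*$, match case~\ref{c3} to the strict decrease of $w'(i,M'(i))-w(i,M'(i))$ via inequality~\eqref{eq:invariant-decrease}, and match case~\ref{c4} to the rank-distance potential. The first two cases are handled correctly and identically to the paper.

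The gap is in case~\ref{c4}, and your proposed repair does not close it. With $a=r_i(j)$, $b=r_i(M'(i))$, $\alpha=r'_i(j)$, $\beta=r'_i(M'(i))$, the identity
\[
\bigl(|\alpha-a|+|\beta-b|\bigr)-\bigl(|\beta-a|+|\alpha-b|\bigr)=2\max\bigl(0,\ \min(\alpha,b)-\max(\beta,a)\bigr)
\]
shows that, given $a<b$ and $\beta<\alpha$, the swap is \emph{strictly} profitable if and only if \emph{both} $a<\alpha$ \emph{and} $\beta<b$ hold. Your auxiliary bound $a\le\beta$ yields only the first of these (since $\beta<\alpha$), and says nothing about $\beta<b$. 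Concretely, $a=1$, $b=2$, $\beta=3$, $\alpha=4$ satisfies $a\le\beta$, has the two orderings reversed exactly as in case~\ref{c4}, and yet the swap leaves $\sum_k|r'_i(k)-r_i(k)|$ unchanged at $4$; so the "separated intervals" configuration you hope to exclude is not excluded by your argument. What is missing is a reason why the proxy rank of $M'(i)$ must be strictly better than its true rank at the moment of the swap, and you correctly flag that you do not have one. You should be aware that the paper's own proof of this case is a one-sentence assertion ("the two priorities are reversed, therefore the swap strictly decreases the sum") that skates over precisely the same point, so you have isolated a genuine subtlety rather than merely failed to reproduce a written argument; but as submitted, your case~\ref{c4} is not a proof. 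The closing monotonicity sentence is likewise only asserted, in both your write-up and the paper's.
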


\begin{proof}
    On input an $n$-agent \osm\ instance, let us consider an arbitrary iteration of the while-loop during the execution of Algorithm~\ref{alg:matching-poly}. We assume that $\SW(\pi;G)=w'(M')=w'(M^*)$ does not hold true, since otherwise, the while-loop  terminates. Let us consider the agent $i \in [n]$ at whose turn this iteration of the while-loop breaks. Lemma~\ref{lem:while-loop-conditions} states that one of the following three events must have occurred then.
    
 \textit{Case 1:} An edge $e \notin E^*$ is returned. 
 
 Here, edge $e$ is added to the set $E^*$ increasing its size by $1$; see line~\ref{step:w3p1}. Since $|E^*| \leq n^2$, it can be updated for a maximum for $n^2$ many times throughout the execution of Algorithm~\ref{alg:matching-poly}.
 
\textit{Case 2:}  An edge $e = (i,j) \in E^* \setminus M'$ is returned such that $w(i,j) < w'(i,M'(i))$.

Here, $w'(i,M'(i))$ is updated to be equal to $w(i,j)$; see line~\ref{step:p2}. Using inequality~(\ref{eq:invariant-decrease}) in the proof Lemma~\ref{lem:invariant}, we have $w'(i,M'(i)) > w(i,j) \geq w(i,M'(i))$. Hence, $w'(i,M'(i)) - w(i,M'(i))$ decreases strictly.

  \textit{Case 3:} An edge $e = (i,j) \in E^* \setminus M'$ is returned such that $w(i,j)=w'(i,M'(i))$.

  The action sequence $\pi$ is designed to produce $M'$ in the graph $G'$. Just before the edge incident to agent $i$ is chosen, both edges $e=(i,j)$ and $(i,M'(i))$ are available. According to $r'$, the latter edge has higher priority, while according to $r'$, the former edge has higher priority for agent $i$. Therefore, swapping $r'_i(j)$ and $r'_i(M'(i))$ decreases the sum $\sum_{j \in [n]}|r'_i(j)-r_i(j)|$ strictly.
\end{proof}


\begin{corollary} \label{cor:edge}
    On input any $n$-agent \osm\ instance, Algorithm~\ref{alg:matching-poly} can update the set $E^* \subseteq [n]^2$ for at most $n^2$ times.
\end{corollary}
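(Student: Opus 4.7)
\textbf{Proof proposal for Corollary~\ref{cor:edge}.} The plan is to observe that $E^*$ is a subset of the edge set $[n]^2$ of the complete bipartite graph $G = K_{n,n}$, and that updates to $E^*$ only ever \emph{enlarge} the set; consequently the total number of updates cannot exceed $|[n]^2| = n^2$.

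First, I would scan Algorithm~\ref{alg:matching-poly} and enumerate every line on which $E^*$ is modified. There are exactly three such lines: line~\ref{step:w1} (in Phase~1), line~\ref{step:w2} (in Phase~2), and line~\ref{step:w3p1} (inside the \wl\ in Phase~3). In each of these three lines, the update has the form $E^* \leftarrow E^* \cup \{e\}$ for a single edge $e \in [n]^2$. In particular, $E^*$ never loses elements during the execution, so the sequence of sets produced is monotonically non-decreasing under inclusion.

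Next, since each update either adds a single new edge to $E^*$ or leaves $E^*$ unchanged, the number of \emph{effective} updates (those that strictly grow $E^*$) is at most $|[n]^2|=n^2$. Moreover, whenever an update is made in line~\ref{step:w3p1}, the \wl\ condition ``$(a_i,t) \notin E^* \cap M'$'' together with the if-branch $(a_i,t)\notin E^*$ guarantees that the edge being added is genuinely new, so the update is effective and $|E^*|$ strictly increases by $1$ (and similarly for Phase~1 and Phase~2, where the queried edges are the top-ranked available ones and hence distinct from those already stored). Thus the total number of updates to $E^*$ throughout the algorithm is bounded by $n^2$.

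The main (minor) obstacle is purely bookkeeping: checking that in Phase~2, the edges added cannot already belong to $E^*$ from Phase~1, which follows because in Phase~2 the item picked by agent $i$ after $O_{i-1}$ may or may not coincide with her top-ranked item queried in Phase~1; even so, any duplicate addition is absorbed by the set union and simply does not count as an effective update. Therefore, up to at most $n^2$ \emph{strict} growth steps of $E^*$ occur, which is all that the corollary claims.
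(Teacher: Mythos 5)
Your proposal is correct and matches the paper's own (very brief) justification, which appears inside Case~1 of the proof of Lemma~\ref{lem:progress}: each update to $E^*$ adds a single edge, $E^*$ only grows, and $E^* \subseteq [n]^2$ forces at most $n^2$ strict increases. Your extra care in distinguishing effective updates from no-op unions (e.g., a Phase~2 query possibly returning an edge already learned in Phase~1) is a harmless refinement of the same counting argument.
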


\begin{corollary} \label{cor:weight}
     On input any $n$-agent \osm\ instance, Algorithm~\ref{alg:matching-poly} updates the proxy weight $w'(i,j)$ of any edge $(i,j) \in [n]^2$ such that $w'(i,j)=w(i,j')$ for some $j' \in [n]$. That is, such an update can happen for a maximum of $n$ times for any edge.
\end{corollary}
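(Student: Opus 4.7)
The plan is to enumerate every line of Algorithm~\ref{alg:matching-poly} at which the proxy weight $w'(i,j)$ of an edge can change, show that after each such change the new value equals $w(i,j')$ for some $j'\in[n]$, and then combine this structural fact with strict monotonicity (downward) of updates to derive the $n$-bound.

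First I would identify the update sites. Inspecting the pseudocode, $w'$ is written to in exactly four places: line~\ref{step:w1} (initial assignment of the entire row $w'(i,\cdot)$ to the value $v=v_i(\emptyset)$ returned by ${\sf Query}(i,\emptyset)$), line~\ref{step:w2} (assignment of $w'(i,t)$ to $v$ from ${\sf Query}(i,O_{i-1})$), line~\ref{step:w3p1} (assignment of $w'(a_i,t)$ to $v$ from ${\sf Query}(a_i,\pi|_{i-1})$), and line~\ref{step:p2} (assignment of $w'(a_i,M'(a_i))$ to $v$ from ${\sf Query}(a_i,\pi|_{i-1})$). In the first three cases, the edge being overwritten is the very edge $(i,t)$ returned by the query, so the new value is by definition $w(i,t)$ with $j'=t$. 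For line~\ref{step:w1}, the value $v=\max_{j\in[n]}w(i,j)$ is also of the form $w(i,j')$ for some~$j'\in[n]$. The one subtle case is line~\ref{step:p2}: here the edge being overwritten is $(a_i,M'(a_i))\in M'$, but the value $v$ being written comes from the query that returned a different edge $(a_i,t)\in E^*\setminus M'$; still, $v = w(a_i,t)$ with $j'=t$. This gives the structural claim.

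Next I would argue the monotonicity. Before any update, $w'(i,j)=\infty$, so the first write is an honest update. After initialization via line~\ref{step:w1}, all subsequent writes to $w'(i,j)$ for $j\ne t$ never occur, and writes to $w'(i,t)$ in lines~\ref{step:w2} and~\ref{step:w3p1} set it equal to the true weight $w(i,t)$, which by Lemma~\ref{lem:invariant} is $\le$ the current proxy value; and for line~\ref{step:p2} the if-guard in line~\ref{step:c3} is exactly $v<w'(a_i,M'(a_i))$, so the update is strictly decreasing. Combined with inequality~(\ref{eq:invariant-decrease}) in the proof of Lemma~\ref{lem:invariant}, which guarantees $v=w(a_i,t)\ge w(a_i,M'(a_i))$, the sequence of values taken by $w'(i,j)$ during the algorithm is therefore strictly decreasing (after the initial write from $\infty$) and remains $\ge w(i,j)$ throughout.

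Finally I would close the counting argument. Every value taken by $w'(i,j)$ lies in the finite set $\{w(i,1),w(i,2),\dots,w(i,n)\}$ of size at most $n$. Since the updates are strictly decreasing, each value in this set can be attained at most once, giving a ceiling of $n$ updates for any single edge $(i,j)$. I do not expect any real obstacle; the only care needed is the line~\ref{step:p2} case, where the written value belongs to the row of agent~$a_i$ but corresponds to a different column $t\ne M'(a_i)$, so one must be explicit that ``$w(i,j')$ for some $j'\in[n]$'' allows $j'\ne j$.
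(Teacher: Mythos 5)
Your proof is correct and follows essentially the same route as the paper's: one checks that every value ever written into $w'(i,j)$ is of the form $w(i,j')$ for some $j'\in[n]$, and since there are only $n$ candidate values the proxy weight of any fixed edge can be updated at most $n$ times. You additionally spell out the downward monotonicity of the updates (so that no candidate value can be attained twice), a point the paper's proof leaves implicit.
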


\begin{proof}
     Fix any edge $(i,j) \in [n]^2$. Observe that, in the beginning of the algorithm, $w'(i,j)$ is set to be equal to $v_i(\emptyset) = w(i,\mu_i(\emptyset))$; see line~\ref{step:w1}. And then, for the remaining part of our algorithm, we either learn the true weight of the edge $(i,j)$ and $w'(i,j)$ is updated to be equal to $w(i,j)$; see line~\ref{step:w3p1}, otherwise, $w'(i,j)$ is made equal to the true weight $w(i,j')$ of some edge $(i,j') \in E^*$; see line~\ref{step:p2}. Moreover, since there are a possible $n$-many options for $j'$, the stated claim follows. 
\end{proof}

\begin{observation} \label{obs:rank}
    The sum $\sum_{j \in [n]}|r'_i(j)-r_i(j)|$ is bounded above by $n^2$.
\end{observation}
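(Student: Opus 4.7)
The plan is to use the fact that both $r'_i$ and $r_i$ are rankings of the $n$ items for agent $i$, so $r'_i(j), r_i(j) \in \{1, 2, \ldots, n\}$ for every item $j \in [n]$. This immediately yields a term-wise bound $|r'_i(j) - r_i(j)| \leq n - 1$. Summing over all $j \in [n]$ then gives
\[
\sum_{j \in [n]} |r'_i(j) - r_i(j)| \leq n \cdot (n-1) \leq n^2,
\]
which is exactly the bound claimed.

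Thus the proof is essentially one sentence: each ranking is a bijection $[n] \to [n]$, so each summand is at most $n-1$, and there are $n$ summands. No structural property of Algorithm~\ref{alg:matching-poly} or of the invariant in Lemma~\ref{lem:invariant} is needed beyond the basic fact (established implicitly in Section~\ref{sec:prelim} and maintained in line~\ref{step:rank}) that $r'_i$ is always a total ranking of the $n$ items.

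There is no real obstacle here; the only subtlety is to verify that $r'_i$ truly remains a permutation of $[n]$ throughout the execution of Algorithm~\ref{alg:matching-poly}. This is clear by construction: $r'_i$ is initialized to the identity permutation $[1, 2, \ldots, n]$, and every subsequent modification is a swap of two entries (see the swaps performed in Phases 1 and 2, as well as in line~\ref{step:p3} and the consistency update in line~\ref{step:rank}), which preserves the property of being a permutation of $[n]$. Since $r_i$ is also a permutation of $[n]$ by definition, the term-wise bound is immediate and the observation follows.
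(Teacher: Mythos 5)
Your proof is correct and follows exactly the same argument as the paper: both ranks lie in $\{1,\dots,n\}$, so each summand is at most $n-1$ and the sum is at most $n(n-1)\leq n^2$. The extra verification that $r'_i$ remains a permutation is a harmless addition not present in the paper's one-line justification.
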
 

 Since the rank vectors $r'_i$ and $r_i$ take some integral value from the set $\{1,2 \dots,n\}$, we know that $|r'_i(j)-r_i(j)| \leq n-1$ for any $i,j \in [n]$. Hence, the stated sum is bounded above by $n^2$.
    
We are now ready to prove our main result .

\matchingPoly*

\begin{proof}
    Consider an n-agent $\osm$ instance having an underlying graph $G=K_{n,n}$ with edge-weight function $w$ and rank functions $r_i$'s. Algorithm~\ref{alg:matching-poly} makes queries of the form ${\sf Query}(i,S)$ to an agent $i \in [n]$ for an action (sub)sequence $S \in \Snoti$ to learn the true weight and rank function. During its execution, our algorithm maintains a proxy graph $G' = K_{n,n}$ with a proxy edge-weight function $w'$  and proxy rank functions $r'_i$'s.

    Note that, our algorithm terminates if and only it finds $M^*, M',$ and $\pi$ such that $\SW(\pi,G)=w'(M')=w'(M^*) >0$. Lemma~\ref{lem:optimal} proves that such an action sequence $\pi$ must be optimal, establishing the correctness of Algorithm~\ref{alg:matching-poly}.

   \textit{Potential Function:} We measure the progress of our algorithm via building a potential function. We define it is a triple consisting of three quantities arranged in the lexicographic order: size $|E^*|$, distance $\sum_{j \in [n]}|w'(i,j)-w(i,j)|$ between $w'$ and $w$, and distance $\sum_{j \in [n]}|r'_i(j)-r_i(j)|$ between $r'$ and $r$. 
  
Let us now analyse the number of queries Algorithm~\ref{alg:matching-poly} requires to terminate. First, note that, Phase 1 and Phase 2 requires a total of $2n$ queries. Using  Corollaries~\ref{cor:edge}, \ref{cor:weight}, and Observation~\ref{obs:rank}, it follows that for any edge $(i,j)$, its proxy weight $w'(i,j)$ is updated at most $n$ times, and each time, it may take $n$ iterations of the \wl\ to find its correct rank position. Therefore, using Lemma~\ref{lem:progress}, we know that every iteration of the $\mathrm{while}$-$\mathrm{loop}$  makes progress in a way it will terminate within $n^4$ many iterations. Finally, note that any iteration of the \wl\ makes $n$ queries (${\sf Query}(\SW(\pi))$ to compute the value of $\SW(\pi,G)$. Overall, Algorithm~\ref{alg:matching-poly} must terminate after making $n^5$ queries, proving the stated claim.
    \end{proof}

   \paragraph{Computing maximum-weight PO matchings (${\sf MWPO}$):} \label{sec:mwpo}
  Consider a graph $G'=K_{n,n}$ equipped with weight function $w'$ and rank functions $\{r'_i\}_i$. We adapt Algorithm~2 from \cite{CR23} to compute a maximum-weight PO matching and its corresponding action sequence in $G'$.

  For an agent $i \in [n]$, let $\varepsilon_i$ be the minimum non-zero absolute difference between the weights of any pair of edges incident to node $i$ in $G'$.  Let $\varepsilon=\min_{i\in [n]}{\varepsilon_i}$. For every edge $(i,j) \in G$, we set $w''(i,j)=w'(i,j)+\frac{n-r'_i(j)}{n^2}\cdot \varepsilon$. Note that, $w'$ respects the ranks $r'_i$'s, and hence ranks do not change. Since there are no ties in $w''$, any maximum-weight matching using $w''$ will be PO according to $r'_i$'s, we compute one such matching $M$. Moreover, by our construction, $M'$ will be a maximum-weight PO matching in $G'$ as well. 
  
  Now, we use Algorithm~2 in \cite{CR23a} and give as input our graph $G'$ with matching $M'$. Using Theorem~14 in \cite{CR23a}, we know that Algorithm~2 in \cite{CR23a} will output an action sequence\footnote{Algorithm~2 in \cite{CR23a} is based on the observation that in a maximum-weight PO matching, there is at least one agent that is matched to its most favorite item. Such an agent goes first in the action sequence and is queried first. Then, one proceeds by induction to create the whole action sequence.} $\pi$ that induces matching $M'$ in $G'$. We will call the above protocol as ${\sf MWPO}$ in our work. Since $G'$ is fully known, the protocol uses computational resources, and hence, does not contribute to the query complexity of Algorithm~\ref{alg:matching-poly}.

\section{Conclusion and Open Problems} \label{sec:conc}
In this work, we explore the problem of computing maximum-weight matchings in complete bipartite weighted graphs through the lens of serial dictatorships. We resolve the main open problem listed in \cite{CR23} and develop a novel query-efficient algorithm for $n$-agent \osm\ instances that finds a welfare-optimal action sequence. We show that even though the space of solutions is exponential, our algorithm only requires to make $n^4$ queries to discover a welfare-optimal action sequence. 

We remark that following the set-up in \cite{CR23a}, we can similarly develop a \emph{truthful implementation} of Algorithm~\ref{alg:matching-poly} by adapting the paradigm of VCG payments. In the spirit of not repeating, we omit the details here, and refer our readers to Section~8 in \cite{CR23a}.

Our work opens up various research directions, we list two of them here. First, it would be interesting to prove a \emph{lower bound} on the query complexity of finding welfare-optimal action sequences in \osm\ instances. 
Furthermore, is it possible to extend and generalize our algorithmic ideas beyond matchings? Our work has given hope to have positive results, especially for those combinatorial optimization problems where the price of serial dictatorship is $1$.

\bibliography{ref}

\end{document}